\newtheorem{theorem}{Theorem}
\newtheorem{corollary}{Corollary}
\newcommand{\blackslug}{\mbox{\hskip 1pt \vrule width 4pt height 8pt 
depth 1.5pt \hskip 1pt}}
\newcommand{\qed}{\quad\blackslug\lower 8.5pt\null\par\noindent}
\newenvironment{proof}{\par\noindent{\bf Proof:}}{\qed \par}
\newcommand{\bH}{\mbox{${\bf H}$}}
\newcommand{\bG}{\mbox{${\bf G}$}}
\newcommand{\eqdef}{\stackrel{\rm def}{=}}
\title{Measurements and majorization
}
\author{Daniel Lehmann\\School of Engineering
and \\ Center for Language, Logic and Cognition
\\Hebrew University, \\Jerusalem 91904, Israel 
}
\date{February 2012}
\begin{document}
\maketitle
\begin{abstract}
Majorization is an outstanding tool to compare the purity of mixed states
or the amount of information they contain and also the
degrees of entanglement presented by such states in tensor products. 
States are compared by their spectra and majorization defines a partial order 
on those. 
This paper studies the effect of measurements on the majorization relation 
among states. It, then, proceeds to study the effect of local measurements
on the agents sharing an entangled global state.
If the result of the measurement is recorded, Nielsen and 
Vidal~\cite{Nielsen-Vidal:majorization} showed that the expected spectrum
after any P.O.V.M. measurement majorizes the initial spectrum, 
i.e., a P.O.V.M. measurement
cannot, in expectation, reduce the information of the observer. 
A new proof of this result is presented and, as a consequence, the {\em only if} part
of Nielsen's~\cite{Nielsen:LOCC} characterization of LOCC transformations is 
generalized to $n$-party entanglement.
If the result of a bi-stochastic measurement is not recorded, 
the initial state majorizes the final
state, i.e., no information may be gained by such a measurement.
This strengthens a result of A. Peres~\cite{Peres:QuantumTheory}.
In the $n$-party setting, no local trace preserving measurement by Alice 
can change the local state of another agent.
\end{abstract}

\section{Introduction} \label{sec:intro}
In this paper {\em measurement} means P.O.V.M. measurement, i.e., a family
\mbox{$\{ f_{k} \}_{k = 1}^{m}$} of operators \mbox{$f_{k} : \bH \rightarrow \bH$} such
that 
\begin{equation} \label{eq:povm}
\sum_{k = 1}^{m} f_{k}^{\ast} \circ f_{k} \ = \ {id}_{\bH}.
\end{equation}
A bi-stochastic measurement is a measurement that also satisfies 
\begin{equation} \label{eq:bistoch}
\sum_{k = 1}^{m} f_{k} \circ f_{k}^{\ast} \ = \  {id}_{\bH}.
\end{equation}
The term {\em state} means mixed state, i.e., a self-adjoint, weakly positive operator
\mbox{$\rho : \bH \rightarrow \bH$} of trace $1$.
All Hilbert spaces are finite-dimensional.
The majorization relation is written $\succeq$ both for vectors of 
non-negative real numbers and for self-adjoint operators.
We shall only use the majorization relation to compare two vectors that sum
up to the same value (often $1$ but not always)  or two self-adjoint operators
of equal traces (often $1$ but not always). 
The entropy of a state $\sigma$ is denoted $E(\sigma)$.
The spectrum of a state $\sigma$ is denoted $Sp(\sigma)$.
The projection on $x$ is denoted \mbox{$\mid \! \! x \rangle \langle x \! \! \mid$}.
The conjugate of a complex number $c$ is denoted $\overline{c}$.

The basic question we study is the following: how does a measurement change
the {\em purity} of a quantic state? 
The understanding is that when the observer knows that a quantic system
is in a specific pure state, he knows everything that can possibly be known about 
the system. 
If, for all he knows, the system is in a state that is not pure, he is 
somehow uncertain about the state of the system. 
The purpose of a measurement is to gather information about the system 
and therefore one expects that the state resulting from a measurement will be
{\em purer} than the initial state. 
This paradigm fits the classical intuition, but, in QM, caveats must be addressed.
Consider, first, a measurement the result of which is recorded.
Classically, the effect of such a measurement on the observer's 
information depends on its result and the observer may find himself in a more 
uncertain situation after the measurement, but, 
in expectation, the observer's uncertainty cannot increase.
One expects the same to hold in QM.
Consider, now, a measurement the result of which is not recorded.
Classically, such a measurement does not change the observer's knowledge 
and therefore does not change the uncertainty attached to the system. 
In QM such a measurement, which can be realized, for example, 
by sending a particle onto one of two different paths that are then merged,
if no witness exists of which path has been taken, may transform a pure state into
a state that is not pure and therefore increase the observer's uncertainty.

The discussion above has not fixed the way one should measure {\em purity}
or {\em uncertainty}.
A lot of effort has been devoted to measuring qualitatively and quantitatively
the {\em purity} of quantic states. 
Recently, Nielsen showed the relevance of the 
majorization partial order in Quantum Information:
see~\cite{Marshall-Olkin:Inequalities} on majorization and~\cite{Nielsen:intro} for
its relevance to QI. Nielsen credits Uhlmann~\cite{Uhlmann:Entropy} for noticing
the link between majorization and QM.
This paper considers the effect of quantic measurements on the majorization
partial order among states. It, then, applies the results obtained to the study of
the effect of local measurements to local states of an entangled system.

\section{The effect of measurements} \label{sec:povm-measurements}
\subsection{Spectra} \label{sec:spectra}
If one decides to measure some observable defined by a family
\mbox{$\{ f_{k} \}_{k = 1}^{m}$} 
one will obtain some result, i.e., some $k$, for one's measurement.
The initial state of the system defines a probability distribution 
on the possible states that can be the result of the measurement, 
not a single state.
If the initial state is $\sigma$ then the probability of obtaining result $k$ 
is given by:
\begin{equation} \label{eq:prob-gen}
p_{k} \ = \ Tr ( f_{k}  \circ \sigma \circ {f}_{k}^{\ast}).
\end{equation}
If \mbox{$p_{k} =$} $0$, the result $k$ is never obtained.
If \mbox{$p_{k} >$} $0$, the state that results from the measurement is given by:
\begin{equation} \label{eq:new-state}
\sigma'_{k} \ = \ {{1} \over {p_{k}}} \ f_{k} \circ \sigma \circ f_{k}^{\ast} .
\end{equation}

Our purpose is to compare the spectrum of the state before the measurement 
and the spectrum after the measurement. 
There are two spectra that come to mind as candidates for the spectrum
after the measurement.
The first one is the spectrum of the state
\mbox{$\sum_{k = 1}^{m} \, p_{k} \, \sigma'_{k} =$}
\mbox{$\sum_{k = 1}^{m} \, f_{k} \circ \sigma \circ f_{k}^{\ast}$}.
In the literature, when this is the result considered, 
one calls the measurement a {\em trace-preserving} measurement.
But there is another spectrum that can be considered 
to be the result of the measurement:
the expected spectrum defined as a convex combination 
of the spectra of the states \mbox{$\sigma'_{k}$} 
where the spectrum of $\sigma'_{k}$ is weighted by the probability $p_{k}$
of obtaining the state $\sigma'_{k}$. 
For defining this combination, we consider a spectrum 
to be a {\em decreasing} vector of non-negative real numbers 
and add {\em componentwise}.
In other term, the $k$'th largest value of the expected spectrum is 
the expected value of the $k$th largest values obtained in the different possible
outcomes and the spectrum considered is:
\mbox{$\sum_{k = 1}^{m} \, p_{k} \, Sp ( \sigma'_{k} ) =$}
\mbox{$\sum_{k = 1}^{m} \, Sp ( f_{k} \circ \sigma \circ f_{k}^{\ast} )$}.
When such a result is considered, the  literature calls the measurement an 
{\em efficient} measurement.

\sloppypar
In summary we want to compare $Sp( \sigma )$,
\mbox{$\sum_{k = 1}^{m} \, Sp( f_{k} \circ \sigma \circ f_{k}^{\ast} )$} and
\mbox{$Sp ( \sum_{k = 1}^{m} \, f_{k} \circ \sigma \circ f_{k}^{\ast} )$}.
We shall show that, for any measurement, the final expected spectrum majorizes 
the initial spectrum:
\mbox{$\sum_{k = 1}^{m} \, Sp( f_{k} \circ \sigma \circ f_{k}^{\ast} ) \succeq$}
\mbox{$Sp( \sigma )$} and that, for any {\em bi-stochastic} measurement:
\mbox{$Sp( \sigma ) \succeq$} 
\mbox{$Sp ( \sum_{k = 1}^{m} \, f_{k} \circ \sigma \circ f_{k}^{\ast} )$}.

If we consider entropy, a possible measure of information, one notes that there
are two natural ways of measuring the entropy resulting 
from an efficient measurement. 
One may consider the entropy of the expected spectrum
defined above, but one may also consider the expected entropy.
The former quantity is defined by 
\mbox{$S_{1} =$}
\mbox{$ S ( \sum_{k = 1}^{m} \, p_{k} \, Sp ( \sigma'_{k} ) ) $} and the latter by
\mbox{$S_{2} =$}
\mbox{$ \sum_{k = 1}^{m} \, p_{k} \, S ( Sp ( \sigma'_{k} ) ) $}.
The result below implies that \mbox{$S_{1} \leq S ( \sigma )$}.
The concavity of entropy implies \mbox{$S_{2} \leq S_{1}$}.
The result above therefore implies a definite strengthening of the fact that, in 
expectation, the entropy cannot be increased by an efficient measurement.

\section{Efficient measurements} \label{sec:efficient}
\subsection{Past work} \label{sec:pastexpect}
Theorem~\ref{the:expect} below is 
Theorem~12 of~\cite{Nielsen-Vidal:majorization}.
The authors present the result as a corollary of the characterization of LOCC 
transformations in $2$-party systems in a pure state obtained 
in~\cite{Nielsen:LOCC}. 
The $1$-party result is proved by reduction to the $2$-party result by purification. 
The proof presented below is a direct proof.

We shall rely on two results.
The first is a corollary of a theorem of Y. Fan (Theorem~1 of~\cite{Fan:49}).
It may be found in~\cite{Marshall-Olkin:Inequalities} p. 241.
\begin{theorem} \label{the:sum}
For any self-adjoint matrices $A$ and $B$, 
\mbox{$A , B : \bH \rightarrow \bH$}, one has
\mbox{$Sp(A) + Sp(B) \succeq$} \mbox{$Sp ( A + B )$}.
\end{theorem}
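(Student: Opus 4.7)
The plan is to reduce the claim to the standard Ky Fan variational characterization of partial sums of eigenvalues of a self-adjoint operator, namely that for any self-adjoint $M : \bH \rightarrow \bH$ and any $k$,
\[
\sum_{i=1}^{k} Sp(M)_{i} \ = \ \max \{ Tr(P M) : P \mbox{ an orthogonal projection of rank } k\}.
\]
Since both $Sp(A)$ and $Sp(B)$ are written in decreasing order, the top-$k$ partial sum of the componentwise sum $Sp(A)+Sp(B)$ equals $\sum_{i=1}^{k} Sp(A)_{i} + \sum_{i=1}^{k} Sp(B)_{i}$, so by the definition of majorization the theorem reduces to proving
\[
\sum_{i=1}^{k} Sp(A+B)_{i} \ \leq \ \sum_{i=1}^{k} Sp(A)_{i} + \sum_{i=1}^{k} Sp(B)_{i}
\]
for every $k$, together with the trace equality at $k = \dim \bH$.

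To obtain the inequality I would pick a rank-$k$ projection $P_{0}$ attaining the maximum in the variational principle for $A+B$. Linearity of the trace then gives
\[
\sum_{i=1}^{k} Sp(A+B)_{i} \ = \ Tr(P_{0} A) + Tr(P_{0} B),
\]
and applying the variational principle to $A$ and to $B$ separately yields $Tr(P_{0} A) \leq \sum_{i=1}^{k} Sp(A)_{i}$ and $Tr(P_{0} B) \leq \sum_{i=1}^{k} Sp(B)_{i}$, which combine to the desired bound. The normalization at $k = \dim \bH$ is immediate from $Tr(A+B) = Tr(A) + Tr(B)$, so majorization holds.

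The main obstacle, if one wants a self-contained argument, is proving the Ky Fan characterization itself. I would fix an orthonormal eigenbasis $e_{1}, \ldots, e_{n}$ of $M$ with eigenvalues in decreasing order and rewrite $Tr(PM) = \sum_{i} Sp(M)_{i} \, \langle e_{i}, P e_{i} \rangle$. The coefficients $c_{i} := \langle e_{i}, P e_{i} \rangle$ lie in $[0,1]$ and sum to $k$ (since $P$ is a rank-$k$ projection and $Tr(P) = \sum_{i} c_{i}$), so maximizing the linear functional $\sum_{i} Sp(M)_{i} c_{i}$ under those box-and-sum constraints forces all the weight onto the $k$ largest $Sp(M)_{i}$ values, yielding the bound; projection onto the span of $e_{1}, \ldots, e_{k}$ attains it. As the paper already cites Fan's theorem as known, one may simply invoke it and skip this auxiliary step.
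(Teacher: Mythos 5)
Your proof is correct. Note that the paper does not actually prove this statement: it states it as a known corollary of Ky Fan's theorem, with a pointer to Fan's 1949 paper and to Marshall--Olkin, p.~241. What you have written is precisely the standard argument behind that citation: Ky Fan's maximum principle $\sum_{i=1}^{k} Sp(M)_{i} = \max_{P} Tr(PM)$ over rank-$k$ orthogonal projections, plus linearity of the trace, gives the subadditivity of the partial sums, and the equality of total traces closes the majorization. Your observation that the top-$k$ partial sum of the componentwise sum of two decreasing vectors splits into the two separate partial sums is the (easy but necessary) point that makes the reduction legitimate, and your sketch of the variational principle itself --- writing $Tr(PM) = \sum_{i} Sp(M)_{i}\,\langle e_{i}, P e_{i}\rangle$ with coefficients in $[0,1]$ summing to $k$, then optimizing the linear functional over that polytope --- is sound, since $0 \leq P \leq id$ forces $\langle e_{i}, P e_{i}\rangle \in [0,1]$ and $Tr(P)=k$ gives the sum constraint. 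So you have supplied a complete, self-contained proof where the paper is content to cite one; there is no gap.
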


The second one is most probably well-known, 
but no reference for it has been found.
\begin{theorem} \label{the:sp=}
Let $A$ be a finite dimensional Hilbert space and \mbox{$f : A \rightarrow A$} a linear operator.
Then, \mbox{$Sp ({f}^{\ast} \circ f ) =$}
\mbox{$Sp ( f \circ {f}^{\ast} )$}.
\end{theorem}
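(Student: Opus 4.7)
The plan is to show that $f^{\ast}\circ f$ and $f\circ f^{\ast}$ share the same multiset of eigenvalues by establishing a dimension-preserving correspondence between their eigenspaces for every nonzero eigenvalue, and then resolving the multiplicity of $0$ by a dimension count. Both operators are self-adjoint and positive semidefinite, so their spectra (as decreasing vectors of real eigenvalues) are well-defined.

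First I would verify the key algebraic identity: if $\lambda \neq 0$ and $v$ satisfies $f^{\ast}f v = \lambda v$, then $fv \neq 0$ (otherwise $\lambda v = f^{\ast}(fv) = 0$ would force $v=0$, contradicting $v$ being an eigenvector), and moreover $(ff^{\ast})(fv) = f(f^{\ast}f v) = \lambda (fv)$. Hence the linear map $\Phi_{\lambda} : v \mapsto fv$ sends the $\lambda$-eigenspace of $f^{\ast}\circ f$ into the $\lambda$-eigenspace of $f\circ f^{\ast}$.

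Next I would check that $\Phi_{\lambda}$ is injective on that eigenspace: if $v_{1},\dots,v_{k}$ are linearly independent eigenvectors for $\lambda$ and $\sum_{i} c_{i} fv_{i}=0$, applying $f^{\ast}$ yields $\lambda\sum_{i} c_{i} v_{i}=0$, so (since $\lambda \neq 0$) all $c_{i}$ vanish. By symmetry, swapping the roles of $f$ and $f^{\ast}$ produces an analogous injection from the $\lambda$-eigenspace of $f\circ f^{\ast}$ into that of $f^{\ast}\circ f$. Therefore the two eigenspaces have equal dimension, and the nonzero parts of $Sp(f^{\ast}\circ f)$ and $Sp(f\circ f^{\ast})$ agree as multisets.

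Finally, since both operators act on the same finite-dimensional space $A$, their spectra are vectors of equal length $\dim A$; as the nonzero components agree in multiplicity, the multiplicity of the eigenvalue $0$ must also agree, and reordering the components in decreasing order gives $Sp(f^{\ast}\circ f)=Sp(f\circ f^{\ast})$. There is no genuine obstacle here: the only point requiring a little care is to track multiplicities rather than just the set of eigenvalues, which is precisely what the injectivity argument secures.
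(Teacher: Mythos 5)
Your proof is correct and takes essentially the same route as the paper: send each $\lambda$-eigenvector $v$ of $f^{\ast}\circ f$ (for $\lambda\neq 0$) to $f(v)$, check it is a $\lambda$-eigenvector of $f\circ f^{\ast}$, and compare multiplicities. The only cosmetic difference is that you obtain injectivity of $v\mapsto f(v)$ by applying $f^{\ast}$ to a vanishing linear combination, whereas the paper checks that $f$ preserves orthogonality within the eigenspace; you are also a bit more explicit than the paper about the symmetry step and the bookkeeping for the eigenvalue $0$.
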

\begin{proof}
Note, first, that both ${f}^{\ast} \circ f$ and $f \circ {f}^{\ast}$ are self-adjoint
and therefore have $dim(A)$ real eigenvalues.
We shall show that every eigenvalue $\lambda$ of ${f}^{\ast} \circ f$, 
different from zero,
is an eigenvalue of $f \circ {f}^{\ast}$ with the same multiplicity.
To this effect we note that if \mbox{$x \in A$} is an eigenvector of ${f}^{\ast} \circ f$ 
for some eigenvalue \mbox{$\lambda \neq 0$}, then \mbox{$f ( x )$} is an eigenvector of 
$f \circ {f}^{\ast}$ for eigenvalue $\lambda$.
Suppose indeed $x$ and $\lambda$ are as assumed, then
\mbox{${f}^{\ast} ( f (x )) =$} \mbox{$\lambda \, x \neq \vec{0}$} and therefore
\mbox{$f ( x ) \neq \vec{0}$}. 
But \mbox{$(f \circ {f}^{\ast}) (f (x ) ) =$}
\mbox{$f ( ( {f}^{\ast} \circ f ) ( x) =$}
\mbox{$f ( \lambda \, x ) =$}
\mbox{$\lambda \, f ( x )$}.
We are left to show that the multiplicity of $\lambda$ for $f \circ {f}^{\ast}$ is at least its
multiplicity for ${f}^{\ast} \circ f$.
For this, we note that if \mbox{$y \in A$} is orthogonal to $x$, 
then $f ( y )$ is orthogonal to $f ( x )$.
Indeed, \mbox{$\langle f ( y ) \mid f ( x ) \rangle =$}
\mbox{$\langle y \mid ( {f}^{\ast} \circ f ) ( x ) \rangle =$}
\mbox{$ \langle y \mid \lambda \, x \rangle =$} 
\mbox{$\lambda \, \langle y \mid x \rangle = 0$}.
\end{proof}

\subsection{Result} \label{sec:expect-result}
\begin{theorem} \label{the:expect}
Let $\sigma$ be a state and \mbox{$\{ f_{k} \}_{k = 1}^{m} $} a measurement.
Then,
\begin{equation} \label{eq:expect}
\sum_{k = 1}^{m} \, Sp ( f_{k} \circ \sigma \circ f_{k}^{\ast} ) \succeq Sp ( \sigma ).
\end{equation}
\end{theorem}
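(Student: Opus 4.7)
The plan is to reduce everything to the two tools provided. The key observation is that $\sigma$, being a state, is positive self-adjoint, so its square root $\sigma^{1/2}$ exists. This lets us write
\[
f_k \circ \sigma \circ f_k^{\ast} \ = \ (f_k \circ \sigma^{1/2}) \circ (f_k \circ \sigma^{1/2})^{\ast},
\]
so setting $g_k = f_k \circ \sigma^{1/2}$ we recognize $f_k \circ \sigma \circ f_k^{\ast} = g_k \circ g_k^{\ast}$. By Theorem~\ref{the:sp=} applied to $g_k$, its spectrum coincides with that of $g_k^{\ast} \circ g_k = \sigma^{1/2} \circ f_k^{\ast} \circ f_k \circ \sigma^{1/2}$. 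So the left-hand side of~(\ref{eq:expect}) equals $\sum_{k=1}^{m} Sp(\sigma^{1/2} \circ f_k^{\ast} \circ f_k \circ \sigma^{1/2})$.

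Next I would apply Theorem~\ref{the:sum} iteratively (a straightforward induction on $m$, using that the matrices $\sigma^{1/2} \circ f_k^{\ast} \circ f_k \circ \sigma^{1/2}$ are each self-adjoint, and that sums of self-adjoint matrices remain self-adjoint, so Fan's inequality applies at every step) to obtain
\[
\sum_{k = 1}^{m} Sp(\sigma^{1/2} \circ f_k^{\ast} \circ f_k \circ \sigma^{1/2}) \ \succeq \ Sp\!\left( \sum_{k = 1}^{m} \sigma^{1/2} \circ f_k^{\ast} \circ f_k \circ \sigma^{1/2} \right).
\]
Finally, pulling the $\sigma^{1/2}$ factors outside the sum and using the POVM identity~(\ref{eq:povm}), the right-hand side collapses to $Sp(\sigma^{1/2} \circ id_{\bH} \circ \sigma^{1/2}) = Sp(\sigma)$, which is exactly the conclusion of Theorem~\ref{the:expect}.

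The main obstacle, conceptually, is recognizing that one should split $\sigma$ symmetrically as $\sigma^{1/2} \circ \sigma^{1/2}$: the POVM hypothesis naturally controls $\sum_k f_k^{\ast} \circ f_k$, whereas the summands whose spectra we care about are built from $f_k \circ \sigma \circ f_k^{\ast}$. Theorem~\ref{the:sp=} is precisely the bridge between these two forms, but it can only be invoked after the symmetric splitting has been performed. Once that trick is in place, the remainder is routine: Fan's inequality delivers the majorization and the POVM condition eliminates the $f_k$'s.
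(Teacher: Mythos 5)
Your proof is correct and follows essentially the same route as the paper: write $\sigma = \sigma^{1/2}\circ\sigma^{1/2}$, use Theorem~\ref{the:sp=} to swap $f_k\circ\sigma\circ f_k^{\ast}$ for $\sigma^{1/2}\circ f_k^{\ast}\circ f_k\circ\sigma^{1/2}$, then apply Fan's inequality and the POVM identity. Your explicit remark that Theorem~\ref{the:sum} must be iterated by induction on $m$ is a small point the paper leaves implicit, but the argument is otherwise identical.
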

\begin{proof}
The operator $\sigma$ is self-adjoint and weakly positive, it has therefore
a square root, i.e., a self-adjoint, weakly positive operator 
\mbox{$\alpha : \bH \rightarrow \bH$}
such that \mbox{$\sigma =$} \mbox{$\alpha \circ {\alpha}^{\ast}$}. 
Let \mbox{$\beta_{k} =$} \mbox{$f_{k} \circ \alpha$}. 
We have
\mbox{$f_{k} \circ \sigma \circ f_{k}^{\ast} =$} 
\mbox{$\beta_{k} \circ {\beta}_{k}^{\ast}$}. 
By Theorem~\ref{the:sp=} we have: 
\mbox{$Sp ( \beta_{k} \circ {\beta}_{k}^{\ast} ) =$} 
\mbox{$Sp ( {\beta}_{k}^{\ast} \circ \beta_{k} )$}.
We conclude that
\[
Sp ( f_{k} \circ \sigma \circ f_{k}^{\ast} ) \ = \ 
Sp ( {\alpha}^{\ast} \circ f_{k}^{\ast} \circ f_{k} \circ \alpha ).
\]
By Theorem~\ref{the:sum} and Equation~(\ref{eq:povm}) one has:
\begin{equation} \label{eq:Fan-Alice}
\sum_{k = 1}^{m} Sp ( \alpha^{\ast} \circ f_{k}^{\ast} \circ f_{k} \circ \alpha ) 
\ \succeq \ 
Sp ( \sum_{k = 1}^{m}  \alpha^{\ast} \circ f_{k}^{\ast} \circ f_{k} \circ \alpha) \ = \ 
Sp ( \alpha^{\ast} \circ \alpha ) \ = \ Sp ( \sigma ).
\end{equation}
\end{proof}

\section{Trace preserving measurements} \label{sec:no-record}
\subsection{Past work} \label{sec:Peres}
On p. 262, A. Peres~\cite{Peres:QuantumTheory} shows that a trace preserving 
von Neumann measurement cannot decrease entropy, 
or any concave function of the state for that matter. 
The meaning of such a result is that a measurement whose
result is not recorded can never increase the observer's information, but it can,
in fact, decrease this information.
We strengthen Peres' result on two counts.
First, by showing that the initial state majorizes the
final state. This indeed is a strengthening of Peres' result since Schur's
characterization of real functions that preserve majorization 
(Theorem 3.A.4 in~\cite{Marshall-Olkin:Inequalities}) implies that entropy and any
concave function considered by Peres, {\em anti-preserve} majorization:
\mbox{$\sigma \succeq \rho$} implies \mbox{$E(\sigma) \leq E(\rho)$}:
see, for example Proposition 4.2.1 in~\cite{Nielsen:intro}.
Then, the result is proved for any bi-stochastic measurement, 
not only von Neumann measurements.
 
\subsection{Result} \label{sec:vNresult}
The following is due to A. Uhlmann~\cite{Uhlmann:theorem}.
The proof given here for completeness' sake is streamlined from the proof
of Theorem 5.1.3 in~\cite{Nielsen:intro}.
\begin{theorem} \label{the:vNresult}
Let \mbox{$\sigma , \tau: \bH \rightarrow \bH$} be states. 
Then 
\begin{equation} \label{eq:vN}
\sigma \succeq \tau {\rm \ iff \ }
\tau \ = \ \sum_{k = 1}^{m} \: f_{k} \circ \sigma \circ f_{k}^{\ast}
\end{equation}
for some bi-stochastic measurement 
\mbox{$\{ f_{k} \}_{k = 1}^{m}$}. 

\end{theorem}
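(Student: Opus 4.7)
The plan is to prove both directions by establishing a dictionary between the measurement operators $\{f_{k}\}$ and a doubly stochastic matrix acting on eigenvalue vectors, and then invoking the classical Hardy--Littlewood--P\'olya (HLP) theorem: for vectors of non-negative reals with equal sum, $x \succeq y$ iff $y = D x$ for some doubly stochastic matrix $D$. This pushes all of the analytic content onto a well-known fact about real vectors and reduces each half of the equivalence to a direct computation.

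For the ``if'' direction, assume $\tau = \sum_{k} f_{k} \circ \sigma \circ f_{k}^{\ast}$ with $\{f_{k}\}$ bi-stochastic. Diagonalize $\sigma = \sum_{i} \lambda_{i} \mid \! i \rangle \langle i \! \mid$ and $\tau = \sum_{j} \mu_{j} \mid \! j \rangle \langle j \! \mid$ in their respective eigenbases, eigenvalues listed in decreasing order. Expanding $\mu_{j} = \langle j \mid \tau \mid j \rangle$ and inserting the spectral decomposition of $\sigma$ gives $\mu_{j} = \sum_{i} D_{ji} \lambda_{i}$ with $D_{ji} = \sum_{k} |\langle j \mid f_{k} \mid i \rangle|^{2}$. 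Summing over $j$ and using completeness of $\{\mid \! j \rangle\}$ collapses the expression to $\langle i \mid (\sum_{k} f_{k}^{\ast} \circ f_{k}) \mid i \rangle = 1$ by Equation~(\ref{eq:povm}); summing over $i$ symmetrically gives $\langle j \mid (\sum_{k} f_{k} \circ f_{k}^{\ast}) \mid j \rangle = 1$ by Equation~(\ref{eq:bistoch}). Hence $D$ is doubly stochastic, and HLP yields $Sp(\sigma) \succeq Sp(\tau)$.

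For the ``only if'' direction, assume $Sp(\sigma) \succeq Sp(\tau)$. HLP furnishes a doubly stochastic $D$ with $Sp(\tau) = D \cdot Sp(\sigma)$, and Birkhoff's theorem decomposes $D = \sum_{k} p_{k} \Pi_{k}$ as a convex combination of permutation matrices. Write $\sigma = U \Lambda_{\sigma} U^{\ast}$ and $\tau = V \Lambda_{\tau} V^{\ast}$ in their eigenbases. Each $\Pi_{k} \Lambda_{\sigma} \Pi_{k}^{T}$ is diagonal with entries permuted by $\pi_{k}$, so matching the diagonal of $\Lambda_{\tau}$ against $D \cdot Sp(\sigma)$ forces $\Lambda_{\tau} = \sum_{k} p_{k} \Pi_{k} \Lambda_{\sigma} \Pi_{k}^{T}$. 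Setting $f_{k} = \sqrt{p_{k}}\, V \Pi_{k} U^{\ast}$ then yields $\tau = \sum_{k} f_{k} \circ \sigma \circ f_{k}^{\ast}$, and the orthogonality of the $\Pi_{k}$ together with the unitarity of $U$ and $V$ deliver $\sum_{k} f_{k}^{\ast} \circ f_{k} = \sum_{k} f_{k} \circ f_{k}^{\ast} = id_{\bH}$ without further work.

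The main obstacle is largely organizational: one must keep two distinct eigenbases and both adjoints straight, so that the two bi-stochasticity conditions on the measurement correspond symmetrically to the row and column stochasticity conditions on $D$. The substantive input is the HLP/Birkhoff pair, invoked as a black box; no eigenvalue inequalities need to be manipulated directly, and no appeal to Theorem~\ref{the:sum} or Theorem~\ref{the:sp=} is required (those gave the one-sided result in the previous section, whereas here bi-stochasticity forces a two-sided constraint that is cleaner to extract by passing through the matrix $D$).
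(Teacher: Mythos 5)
Your proof is correct and follows essentially the same route as the paper's: build the doubly stochastic matrix $D_{ji} = \sum_{k} |\langle j \mid f_{k} \mid i \rangle|^{2}$ from the measurement operators and invoke the Hardy--Littlewood--P\'{o}lya and Birkhoff theorems on the eigenvalue vectors. The only difference is organizational: by working with the two eigenbases of $\sigma$ and $\tau$ simultaneously (absorbing the diagonalizing unitaries $U$, $V$ into the definition $f_{k} = \sqrt{p_{k}}\, V \Pi_{k} U^{\ast}$), you avoid the paper's preliminary ``facilitating assumption'' that $\sigma$ and $\tau$ commute and the subsequent unitary-conjugation step that removes it.
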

\begin{proof}
We first deal with the {\em if} direction.
Assume \mbox{$\{f_{k}\}_{k =1}^{m}$} is a bi-stochastic measurement.
Let us make, at first, the facilitating assumption that there is a basis 
\mbox{$\{ x_{i} \}_{i = 1}^{n}$} 
of eigenvectors of both $\sigma$ and
\mbox{$\tau \eqdef$}
\mbox{$\sum_{k = 1}^{m} \, f_{k} \circ \sigma \circ f_{k}^{\ast}$}.
Let 
\mbox{$\sigma ( x_{i} ) =$} \mbox{$\lambda_{i} \, x_{i}$} 
and \mbox{$\tau ( x_{i} ) =$} \mbox{$\mu_{i} \, x_{i}$} for any $i$
and let $\lambda$ (resp. $\mu$) be the real column vector \mbox{$[ \lambda_{i} ]$}
(resp. \mbox{$[ \mu_{i} ] )$}.

Define \mbox{$c_{i , j}^{k} =$} \mbox{$\langle x_{i} \mid f_{k} \mid x_{j} \rangle$}.
We have \mbox{$f_{k} ( x_{j} ) =$} \mbox{$\sum_{i = 1}^{n} \, c_{i , j}^{k} \, x_{i}$}.
We have, for any $i$, $j$:
\mbox{$\langle x_{j} \mid f_{k}^{\ast} \mid x_{i} \rangle \ = \ \overline{c_{i , j}^{k}}$}
and \mbox{$f_{k}^{\ast} ( x_{i} ) =$} 
\mbox{$\sum_{j = 1}^{n} \, \overline{c_{i , j}^{k}} \, x_{j}$}. Therefore
\[
\langle x_{j} \mid f_{k} \circ \sigma \circ f_{k}^{\ast} \mid x_{i} \rangle \ = \ 
\langle f_{k}^{\ast} ( x_{j} ) \mid \sigma \mid f_{k}^{\ast} ( x_{i} ) \rangle \ = \ 
\]
\[
\langle \sum_{s = 1}^{n} \, \overline{c_{j , s}^{k}} \, x_{s} \mid \sigma \mid 
\sum_{t = 1}^{n} \, \overline{c_{i , t}^{k}} \, x_{t} \rangle \ = \ 
\sum_{l = 1}^{n} \, c_{j , l}^{k} \, \lambda_{l} \, \overline{ c_{i , l}^{k} }.
\]
Similarly
\[
\langle x_{j} \mid f_{k} \circ f_{k}^{\ast} \mid x_{i} \rangle \ = \ 
\sum_{l = 1}^{n} \, c_{j , l}^{k} \, \overline{ c_{i , l}^{k} }
\]
and 
\[
\langle x_{j} \mid f_{k}^{\ast} \circ f_{k} \mid x_{i} \rangle \ = \ 
\sum_{l = 1}^{n} \, \overline{c_{l , j}^{k}} \, c_{l , i}^{k}.
\]
Therefore:
\[
\mu_{i} \ = \ \langle x_{i} \mid \tau \mid x_{i} \rangle \ = \ 
\langle x_{i} \mid \sum_{k = 1}^{m} ( f_{k} \circ \sigma 
\circ f_{k}^{\ast} ) \mid x_{i} \rangle \ = \ 
\sum_{l = 1}^{n} \lambda_{l} \, 
( \sum_{k = 1}^{m} c_{i , l}^{k} \, \overline{ c_{i , l}^{k} } ).
\]

Let the \mbox{$n \times n$} matrix $B$ be defined by:
\mbox{$b_{i , j} =$}  
\mbox{$\sum_{k = 1}^{m} c_{i , j}^{k} \, \overline{ c_{i , j}^{k} }$}.
We see that 
\begin{equation} \label{eq:vec-maj}
\mu \ = \  B \: \lambda.
\end{equation}
The summation of the elements of the $i$th row of $B$ is:
\[
\sum_{j = 1}^{n} \, b_{i , j} \ = \ 
\sum_{k = 1}^{m} \sum_{j = 1}^{n}  c_{i , j}^{k} \, \overline{ c_{i , j}^{k} } \ = \ 
\sum_{k = 1}^{m} \, \langle x_{i} \mid f_{k} \circ f_{k}^{\ast} \mid x_{i} \rangle \ = \ 1.
\]
Similarly for the summation of the elements of the $j$th column:
\[
\sum_{i = 1}^{n} \, b_{i , j} \ = \ 
\sum_{k = 1}^{m} \sum_{i = 1}^{n}  c_{i , j}^{k} \, \overline{ c_{i , j}^{k} } \ = \ 
\sum_{k = 1}^{m} \, \langle x_{j} \mid f_{k} \circ f_{k}^{\ast} \mid x_{j} \rangle \ = \ 1.
\]

We note that the matrix $B$ is bi-stochastic and conclude by Theorem 2.A.4
of~\cite{Marshall-Olkin:Inequalities} that \mbox{$\lambda \succeq \mu$}.

We have proved our claim under the assumption that $\sigma$ and $\tau$ commute.
Let us now treat the general case. There is a unitary transformation $U$ such that
\mbox{$U \circ \sigma \circ U^{\ast}$} and $\tau$ commute.
Consider the family \mbox{$g_{k} =$} \mbox{$f_{k} \circ U^{\ast}$}.
The $g_{k}$ form a bi-stochastic measurement.
We have just proven that 
\[
U \circ \sigma \circ U^{\ast} \succeq 
\sum_{k = 1}^{m} \, g_{k} \circ U \circ \sigma \circ U^{\ast} \circ g_{k}^{\ast} \ = \ 
\sum_{k = 1}^{m} \, f_{k} \circ \sigma \circ f_{k}^{\ast}.
\]
We conclude that 
\mbox{$\sigma \succeq \sum_{k = 1}^{m} \, f_{k} \circ \sigma \circ f_{k}^{\ast}$}. 

The proof of the {\em only if} direction is easier. Notations are as above.
Assume \mbox{$\sigma \succeq \tau$}.
Assume, at first, that $\sigma$ and $\tau$ commute.
By results of Hardy, Littlewood and P\'{o}lya~\cite{HLP:Inequalities}
and Birkhoff~\cite{Birkhoff:bistoch} (see Theorem 3.1.2 in~\cite{Nielsen:intro} )
there is a vector 
\mbox{$\{ p_{i} \}_{k = 1}^{m}$} of non-negative 
real numbers that sum up to $1$ and permutation matrices 
\mbox{$\{ P_{k} \}_{k = 1}^{m}$} such that \mbox{$ \mu =$}
\mbox{$\sum_{k =1}^{m} \, p_{k} \, P_{k} \lambda $}.
Considering the diagonal matrices representing $\tau$ and $\sigma$ in the basis
of their joint eigenvectors. One sees that
\mbox{$\tau =$} 
\mbox{$\sum_{k = 1}^{m} \, p_{k} \, P_{k} \circ \sigma \circ P_{k}^{\ast}$}.
The family \mbox{$f_{k} = \sqrt{p_{k}} \, P_{k}$} forms a bi-stochastic measurement
with the desired properties.
Now we want to get rid of the assumption that $\sigma$ and $\tau$ commute.
There is a unitary transformation $U$ such that $\sigma$ and 
\mbox{$\rho =$} \mbox{$U \circ \tau \circ U^{\ast}$} commute. We have
\mbox{$\sigma \succeq \rho$} and we just proved there is a bi-stochastic 
measurement \mbox{$\{ f_{k} \}_{k = 1}^{m}$} such that 
\mbox{$\rho =$} \mbox{$\sum_{k = 1}^{m} \, f_{k} \circ \sigma f_{k}^{\ast}$}.
We have \mbox{$\tau =$} 
\mbox{$\sum_{k = 1}^{m} \, U^{\ast} \circ f_{k} \circ \sigma \circ f_{k}^{\ast} \circ U$}
and the family \mbox{$\{ U^{\ast} \circ f_{k} \}$} is a suitable bi-stochastic measurement.
\end{proof}

It is known that trace preserving measurements may decrease the entropy and
therefore the bi-stochastic assumption in Theorem~\ref{the:vNresult} cannot be 
dispensed with.

\section{Entangled systems} \label{sec:entangled}
We now wish to study the effect of {\em local} measurements on local and global
states in entangled systems.
We assume each of $n$ parties, i.e., agents, 
has some piece of a quantic system. 
The pieces do not have to be similar.
Let \mbox{$\bH = \bH_{1} \otimes \bH_{2} \otimes \ldots \otimes \bH_{n}$} 
be a tensor product of $n$ finite-dimensional Hilbert spaces. 
We shall denote by \mbox{$\bG_{i}$} the tensor product of all spaces 
\mbox{$\bH_{j}$} for \mbox{$j \neq i$}.
We consider that the global system represented by $\bH$ 
is made of $n$ different parts, 
represented by $\bH_{i}$, for \mbox{$i = 1 , \ldots , n$}, 
the $i$'s part being controlled by agent $i$. 
In accordance with tradition, we assume agent $1$ is {\em Alice}.
{\em Bob} will be used as a generic name for any agent other than Alice. 
If the global state of the system is described by state 
\mbox{$\sigma : \bH \rightarrow \bH$}, the local state
of agent $i$ is described by the partial trace of $\sigma$ on \mbox{$\bG_{i}$}:
\mbox{$Tr_{\bG_{i}} ( \sigma ) :$} \mbox{$\bH_{i} \rightarrow \bH_{i}$}.
We focus here on the effect of a measurement performed by Alice on her own
state, Bob's state and the global state.

The effect of a local measurement 
\mbox{$\{ f_{k} \}_{k = 1}^{m}$}, \mbox{$f_{k} : \bH_{1} \rightarrow \bH_{1}$} 
on the global system is that of the global measurement
\mbox{$\{ g_{k} =$} \mbox{$( f_{k} \otimes {id}_{\bG_{1}} ) \}_{k = 1}^{m}$}, 
\mbox{$g_{k} : \bH \rightarrow \bH$}.
Note that, indeed, the latter is a measurement and that it is bi-stochastic iff 
the local measurement is.
Note also that the effect of the local measurement on Alice's state is as expected:
the effect of the local measurement on the local state.
\[
Tr_{\bG_{1}} ( g_{k} \circ \sigma \circ  g_{k}^{\ast} ) \ = \ 
f_{k} \circ Tr_{\bG_{1}} ( \sigma ) \circ f_{k}^{\ast}.
\]

\section{Local efficient measurement} \label{sec:localeff}
\subsection{Majorization of local states} \label{sec:majlocal}
As remarked in Section~\ref{sec:entangled}, a local measurement of Alice
acts on the global state as a global measurement would do and on
Alice's state as it would do if Alice were alone.
We conclude, by Theorem~\ref{the:expect} that the expected spectrum 
of the global state majorizes the spectrum of the initial global state and the
expected spectrum of Alice's local state majorizes the spectrum of her initial
state.
The following describes what happens to the spectrum of Bob's (or any agent different from Alice) state.
\begin{theorem} \label{the:local-other-expect}
Let $\sigma$ be a state of $\bH$ and 
\mbox{$\{ f_{k} \}_{k = 1}^{m}$}, \mbox{$f_{k} : \bH_{1} \rightarrow \bH_{1}$} 
be a local measurement of Alice. 
Let \mbox{$\{ g_{k} =$} \mbox{$ ( f_{k} \otimes {id}_{\bG_{1}} ) \}_{k = 1}^{m}$}.
For any agent \mbox{$i > 1$}, one has:
\begin{equation} \label{eq:Bobexpect}
\sum_{k = 1}^{m} \, Sp ( Tr_{\bG_{i}} ( g_{k} \circ \sigma \circ g_{k}^{\ast} ) ) \ 
\succeq \ Sp ( Tr_{\bG_{i}} ( \sigma ) ).
\end{equation}
\end{theorem}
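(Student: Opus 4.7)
My plan is to reduce the claim to Fan's inequality (Theorem~\ref{the:sum}), much as in the proof of Theorem~\ref{the:expect}, but exploiting a trace-preservation property on the Bob side instead of the $f^\ast f$/$ff^\ast$ trick. Set $\tau_\mu \eqdef Tr_{\bG_i}(g_\mu \circ \sigma \circ g_\mu^\ast)$, a self-adjoint positive operator on $\bH_i$. Theorem~\ref{the:sum} extends from two to $m$ summands by induction (using that two decreasing non-negative vectors add componentwise to a decreasing vector, so that majorization is preserved under adding a common vector to both sides), yielding
\[
\sum_{\mu = 1}^m Sp(\tau_\mu) \ \succeq \ Sp\Bigl(\sum_{\mu = 1}^m \tau_\mu\Bigr).
\]
It therefore suffices to establish the identity $\sum_\mu \tau_\mu = Tr_{\bG_i}(\sigma)$, i.e., that Alice's trace-preserving measurement leaves the reduced state of any other agent unchanged.

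To prove this identity, I pull the sum inside the partial trace and use that $i > 1$, so $\bH_1$ sits inside $\bG_i$: factoring $Tr_{\bG_i} = T \circ Tr_{\bH_1}$, where $T$ traces out the remaining factors $\bH_j$ for $j \neq 1, i$, it is enough to show
\[
\sum_\mu Tr_{\bH_1}\bigl((f_\mu \otimes id_{\bG_1}) \circ \sigma \circ (f_\mu^\ast \otimes id_{\bG_1})\bigr) \ = \ Tr_{\bH_1}(\sigma).
\]
A direct computation in a product basis $\{\mid \! m \rangle \otimes \mid \! j \rangle\}$ of $\bH_1 \otimes \bG_1$, using $\sum_m \mid \! m \rangle \langle m \! \mid \, = id_{\bH_1}$ to contract one index, shows that the left-hand side equals $Tr_{\bH_1}\bigl((\sum_\mu f_\mu^\ast f_\mu \otimes id_{\bG_1}) \circ \sigma\bigr)$, which collapses to $Tr_{\bH_1}(\sigma)$ by the POVM condition~(\ref{eq:povm}). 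Applying $T$ to both sides then gives the desired identity.

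Neither step presents a real obstacle; both are routine. The conceptual point is that, unlike in Theorem~\ref{the:expect}, the operators $\tau_\mu$ already sum to the initial reduced state, so Fan's inequality applies directly without any need to introduce a square root of $\sigma$ or to invoke Theorem~\ref{the:sp=}. The only subtlety worth flagging is that the argument fundamentally relies on $\bG_i$ containing the $\bH_1$ factor on which the $f_\mu$ act: this is precisely what allows the POVM identity $\sum_\mu f_\mu^\ast f_\mu = id_{\bH_1}$ to be deployed \emph{after} the partial trace, and it is the reason the statement is restricted to agents $i > 1$.
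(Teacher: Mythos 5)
Your proof is correct and follows essentially the same route as the paper: apply Fan's inequality (Theorem~\ref{the:sum}) to the operators $Tr_{\bG_{i}}(g_{k}\circ\sigma\circ g_{k}^{\ast})$ and then collapse their sum to $Tr_{\bG_{i}}(\sigma)$ using the POVM condition together with the cyclicity of the partial trace over $\bG_{i}$ with respect to operators supported on $\bH_{1}$. The paper simply states this cyclicity as the identity $Tr_{\bG_{i}}(g_{k}\circ\sigma\circ g_{k}^{\ast}) = Tr_{\bG_{i}}(\sigma\circ g_{k}^{\ast}\circ g_{k})$, where you verify it by a basis computation; the content is the same.
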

\begin{proof}
Since \mbox{$i > 1$}, 
\mbox{$Tr_{\bG_{i}} ( g_{k} \circ \sigma \circ g_{k}^{\ast} ) =$}
\mbox{$Tr_{\bG_{i}} ( \sigma \circ g_{k}^{\ast} \circ g_{k} )$}.
By Theorem~\ref{the:sum}, 
\[
\sum_{k = 1}^{m} \, Sp ( Tr_{\bG_{i}} ( \sigma \circ g_{k}^{\ast} \circ g_{k} ) ) \ \succeq \ 
Sp ( \sum_{k = 1}^{m} \, Tr_{\bG_{i}} ( \sigma \circ g_{k}^{\ast} \circ g_{k} ) ) \ = \ 
\]
\[
Sp ( Tr_{\bG_{i}} ( \sum_{k = 1}^{m} ( \sigma \circ g_{k}^{\ast} \circ g_{k} ) ) \ = \
Sp ( Tr_{\bG_{i}} ( \sigma ) ).
\]
\end{proof}

\subsection{LOCC operations weakly increase the spectra of all local states in the majorization order}
\label{sec:increase}
\begin{theorem} \label{the:increase}
In any LOCC protocol, the spectrum of any initial local state is majorized by its expected final 
local spectrum in the majorization order.
\end{theorem}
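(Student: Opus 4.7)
The plan is to induct on the number of local-measurement steps in the LOCC protocol. A protocol is naturally a tree whose nodes carry (conditional) global states, whose edges correspond to measurement outcomes, and whose label at each internal node records which agent $j$ performs a local measurement $\{ f_{k} \}$ there; classical communication is reflected in the fact that both $j$ and $\{ f_{k} \}$ may depend on the history to that node. Fix an agent $i$. The induction hypothesis at depth $t$ reads
\[
\sum_{b} q_{b} \, Sp( Tr_{\bG_{i}}(\sigma^{b}) ) \ \succeq \ Sp( Tr_{\bG_{i}}(\sigma) ),
\]
where $b$ ranges over depth-$t$ branches, $q_{b}$ is the probability of reaching $b$, and $\sigma^{b}$ is the normalized conditional global state at $b$.

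For the inductive step, fix such a branch $b$. The next step of the protocol performs a local measurement $\{ f_{k}^{b} \}$ by some agent $j$; let $\{ g_{k}^{b} \}$ be its lifting to $\bH$ via tensoring with the identity on the complementary factors. If $j = i$, then this measurement acts on agent $i$'s local state exactly as Theorem~\ref{the:expect} describes, giving
\[
\sum_{k} Sp( Tr_{\bG_{i}}( g_{k}^{b} \circ \sigma^{b} \circ ( g_{k}^{b} )^{\ast} ) ) \ \succeq \ Sp( Tr_{\bG_{i}}(\sigma^{b}) );
\]
if $j \neq i$, the same inequality is Theorem~\ref{the:local-other-expect} applied after a harmless relabelling of parties so that $j$ plays Alice's role. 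Either way, I get one inequality per branch $b$.

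Weighting each branch inequality by $q_{b}$ and summing over $b$ gives the expected local spectrum of $i$ at depth $t + 1$ on the left (each summand already carries the correct incremental probability $Tr( g_{k}^{b} \, \sigma^{b} \, ( g_{k}^{b} )^{\ast} )$ through the un-normalized post-measurement operator) and the expected local spectrum at depth $t$ on the right. Because $\succeq$ is preserved under termwise addition of vectors that are already sorted in decreasing order, the expected local spectrum at depth $t + 1$ majorizes the expected local spectrum at depth $t$. Combining with the induction hypothesis by transitivity of $\succeq$ closes the induction.

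The main obstacle is conceptual bookkeeping rather than a new analytic step: one must be careful that the ``expected spectrum'' is defined componentwise on sorted vectors (not by re-sorting the convex combination), which is precisely what makes $\succeq$ behave additively under convex combinations; and one must observe that Theorem~\ref{the:local-other-expect} is insensitive to the labelling of parties, so that it can be applied whenever the measuring agent differs from $i$. Both ingredients are immediate from the material of the paper, and no new inequality needs to be proved.
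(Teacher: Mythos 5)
Your proposal is correct and follows essentially the same route as the paper: the paper's proof is a one-paragraph observation that each individual LOCC step (a local measurement, handled by Theorem~\ref{the:expect} for the measuring agent and Theorem~\ref{the:local-other-expect} for every other agent, plus local unitaries and classical communication, which are harmless) cannot decrease any expected local spectrum. Your version merely makes explicit the induction over protocol depth and the bookkeeping showing that componentwise convex combinations of sorted spectra preserve $\succeq$, which the paper leaves implicit.
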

\begin{proof}
We have shown in Section~\ref{sec:localeff} that any measurement operation
brings about, for any agent, a situation in which the expected spectrum 
majorizes the initial one. 
A local unitary operation of Alice does not change the local state of Bob 
and does not change the spectrum of her own local state. 
Classical communication does not change the global quantum state. 
We see that no step in a LOCC protocol can decrease any 
local spectrum in the majorization order.
\end{proof}

\subsection{Derivation of a generalization of 
one-half of Nielsen's characterization} \label{sec:Nielsen}
We can now derive a generalization of one half (the only if part) of Nielsen's
Theorem 1 in~\cite{Nielsen:LOCC}.

\begin{corollary} \label{the:onlyif-Nielsen}
If there is an $n$-party protocol consisting of local unitary operations, 
local generalized measurements and classical communication that, starting in a 
mixed global state $\sigma$ terminates {\em for sure}, 
i.e., with probability one, in mixed global state $\sigma'$, then, for every agent $i$,
\mbox{$ Tr_{G_{i}} ( \sigma' ) \succeq Tr_{G_{i}} ( \sigma )$}.
\end{corollary}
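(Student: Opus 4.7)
The plan is to apply Theorem~\ref{the:increase} directly and observe that the ``expected'' final local spectrum appearing there collapses to a single spectrum, because the global final state is deterministic. Theorem~\ref{the:increase}, applied to the given protocol and agent $i$, yields
\[
\sum_{\alpha} p_{\alpha}\, Sp(Tr_{\bG_{i}}(\sigma'_{\alpha})) \ \succeq \ Sp(Tr_{\bG_{i}}(\sigma)),
\]
where the sum is over the branches $\alpha$ of the protocol (sequences of classical measurement outcomes), $p_{\alpha}$ is the probability of branch $\alpha$ and $\sigma'_{\alpha}$ is the conditional final global state on that branch. I want to show that the left-hand side is just $Sp(Tr_{\bG_{i}}(\sigma'))$, at which point the corollary is immediate under the operator-majorization convention fixed in Section~\ref{sec:intro}.

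The key step is the interpretation of ``terminates for sure in $\sigma'$''. This has to mean that on every branch $\alpha$ with $p_{\alpha} > 0$, the conditional final global state satisfies $\sigma'_{\alpha} = \sigma'$; otherwise the agents, who see the classical measurement history, would observe with positive probability a global state different from $\sigma'$, contradicting the hypothesis. Once this is granted, linearity of the partial trace gives $Tr_{\bG_{i}}(\sigma'_{\alpha}) = Tr_{\bG_{i}}(\sigma')$ for every $\alpha$ of positive weight, so each of the spectra $Sp(Tr_{\bG_{i}}(\sigma'_{\alpha}))$ is the same decreasing vector $Sp(Tr_{\bG_{i}}(\sigma'))$. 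The convex combination on the left therefore collapses to $Sp(Tr_{\bG_{i}}(\sigma'))$, and the majorization displayed above becomes $Sp(Tr_{\bG_{i}}(\sigma')) \succeq Sp(Tr_{\bG_{i}}(\sigma))$, i.e., $Tr_{\bG_{i}}(\sigma') \succeq Tr_{\bG_{i}}(\sigma)$, as required.

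The main obstacle here is nothing more than fixing this bookkeeping: one must be explicit that the ``expected local spectrum'' in Theorem~\ref{the:increase} is the componentwise sum of the decreasing spectra of the conditional local states, weighted by the probabilities of the classical measurement histories accessible to the agents, and that deterministic termination really does mean branch-by-branch termination rather than only the averaged global state being $\sigma'$. No additional mathematical content is needed; all of the work has already been discharged by Theorem~\ref{the:increase} and, through it, by Theorem~\ref{the:expect}.
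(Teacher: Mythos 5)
Your proposal is correct and follows essentially the same route as the paper: invoke the step-by-step (aggregated in Theorem~\ref{the:increase}) majorization of the initial local spectrum by the expected final local spectrum, then use deterministic termination to collapse the expectation to the single spectrum $Sp(Tr_{\bG_{i}}(\sigma'))$. Your explicit remark that ``for sure'' must mean branch-by-branch equality of the conditional final states is exactly the (implicit) reading the paper uses when it writes that ``the final mixed local states are $\rho_{i}^{\sigma'}$.''
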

\begin{proof}
At each step of the protocol, we have shown that, for any agent, the initial mixed local state is 
majorized by the expected spectrum of the final mixed local state. If the final global state
is, for sure, $\sigma'$, the final mixed local states are $\rho_{i}^{\sigma'}$ and the expected
spectra are $Sp ( \rho_{i}^{\sigma'} )$. We conclude that, for every $i$, \mbox{$1 \leq i \leq n$}
one has: \mbox{$\rho_{i}^{\sigma'} \succeq \rho_{i}^{\sigma}$}, proving our claim.
\end{proof}

One may note that our results do not use Schmidt's decomposition, which is used heavily
in~\cite{Nielsen:LOCC}.

\section{Local trace preserving measurements} \label{sec:local-no}
We can now show that the result of Section~\ref{sec:vNresult} can
be extended and strengthened. 
A trace preserving bi-stochastic local measurement of Alice cannot bring about 
any additional information concerning the global state, 
Alice's own state or Bob's state. 
In fact, it leaves Bob's state unchanged.
Our claim concerning the global state and Alice's state follows directly from 
Theorem~\ref{the:vNresult} since the transformations of those states are 
bi-stochastic measurements. Let us deal with Bob's case and show that his state
is not affected by Alice's measurement.

\begin{theorem} \label{the:no-rec-Alice}
Let $\sigma$ be a state of $\bH$ and 
\mbox{$\{ f_{k} \}_{k = 1}^{m}$}, \mbox{$f_{k} : \bH_{1} \rightarrow \bH_{1}$} 
be a local measurement of Alice. 
Let \mbox{$\{ g_{k} =$} \mbox{$ ( f_{k} \otimes {id}_{\bG_{1}} ) \}_{k = 1}^{m}$}.
For any agent \mbox{$i > 1$}, one has:
\begin{equation} \label{eq:Bobno}
Tr_{\bG_{i}} ( \sum_{k = 1}^{m} \, g_{k} \circ \sigma \circ g_{k}^{\ast} ) \ = \ 
Tr_{\bG_{i}} ( \sigma ).
\end{equation}
\end{theorem}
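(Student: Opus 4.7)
The plan is to mirror the argument used in Theorem~\ref{the:local-other-expect}: exploit the cyclicity of the partial trace over the $\bH_{1}$-factor (which is legitimate precisely because \mbox{$i > 1$}, so $\bH_{1}$ is one of the subsystems being traced out in $Tr_{\bG_{i}}$) together with the defining POVM identity~(\ref{eq:povm}). The bi-stochastic hypothesis will not be needed, which matches the statement: any local P.O.V.M. by Alice, trace preserving or not in the stronger sense, leaves Bob's reduced state unchanged.

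First, since \mbox{$g_{k} = f_{k} \otimes id_{\bG_{1}}$} acts nontrivially only on the $\bH_{1}$-factor and $\bH_{1}$ is traced out inside $Tr_{\bG_{i}}$, I would move $g_{k}^{\ast}$ cyclically past $\sigma$ to obtain
\[
Tr_{\bG_{i}}(g_{k} \circ \sigma \circ g_{k}^{\ast}) \ = \ Tr_{\bG_{i}}(\sigma \circ g_{k}^{\ast} \circ g_{k}),
\]
exactly as in the proof of Theorem~\ref{the:local-other-expect}.

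Next, by linearity of the partial trace I would pull the sum over $k$ inside:
\[
\sum_{k = 1}^{m} Tr_{\bG_{i}}(g_{k} \circ \sigma \circ g_{k}^{\ast}) \ = \ Tr_{\bG_{i}}\Bigl(\sigma \circ \sum_{k = 1}^{m} g_{k}^{\ast} \circ g_{k}\Bigr).
\]
Finally, since \mbox{$g_{k}^{\ast} \circ g_{k} = (f_{k}^{\ast} \circ f_{k}) \otimes id_{\bG_{1}}$}, the POVM condition~(\ref{eq:povm}) applied to the \mbox{$\{ f_{k} \}$} gives \mbox{$\sum_{k = 1}^{m} g_{k}^{\ast} \circ g_{k} = id_{\bH_{1}} \otimes id_{\bG_{1}} = id_{\bH}$}, and the right-hand side collapses to $Tr_{\bG_{i}}(\sigma)$, as desired.

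No real obstacle arises here: both ingredients (cyclicity over a traced-out factor and the POVM identity) were already used in Theorem~\ref{the:local-other-expect}. The only conceptual point worth emphasising is that the argument uses equality, not merely majorization, because the partial trace of $g_{k}^{\ast} \circ g_{k}$ is moved \emph{before} spectra are extracted, so no reordering of eigenvalues is involved; this is the feature that upgrades the majorization result of the previous section to a genuine invariance of Bob's local state.
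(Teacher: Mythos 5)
Your proposal is correct and follows essentially the same route as the paper: cyclicity of $Tr_{\bG_{i}}$ over the traced-out $\bH_{1}$-factor, linearity to pull the sum inside, and the P.O.V.M. identity~(\ref{eq:povm}) to collapse $\sum_{k} g_{k}^{\ast} \circ g_{k}$ to the identity. In fact your write-up is the cleaner of the two, since the paper's final display mistakenly writes $\sum_{k} g_{k} \circ g_{k}^{\ast}$ (the bi-stochastic condition) where, as you correctly note, only $\sum_{k} g_{k}^{\ast} \circ g_{k} = id_{\bH}$ is used and needed.
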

\begin{proof}
Since $g_{k}$ is the identity on $\bH{i}$, one has:
\[
Tr_{\bG_{i}} ( g_{k} \circ \sigma \circ g_{k}^{\ast} ) \ = \ 
Tr_{\bG_{i}} ( \sigma \circ g_{k}^{\ast} \circ g_{k}^{\ast} ) 
\] 
and
\[
Tr_{\bG_{i}} ( \sum_{k = 1}^{m} \, g_{k} \circ \sigma \circ g_{k}^{\ast} ) \ = \ 
Tr_{\bG_{i}} ( \sum_{k = 1}^{m} \, \sigma \circ g_{k}^{\ast} \circ g_{k} ) \ = \ 
Tr_{\bG_{i}} ( \sigma \circ \sum_{k = 1}^{m} \, g_{k} \circ g_{k}^{\ast} ) \ = \ 
Tr_{\bG_{i}} ( \sigma ).
\]
\end{proof}

We conclude that trace preserving measurements by Alice cannot change 
the local states of any other agent. 
A bi-stochastic trace preserving measurement by Alice can only degrade 
the information contained in Alice's local state or in the global state.

\section*{Acknowledgements}
Dorit Aharonov convinced me that entanglement was fundamental and pointed
me to the right places. Discussions with her, Michael Ben-Or and Samson Abramsky
are gratefully acknowledged.

\bibliographystyle{plain}

\end{document}